\documentclass{llncs}

\usepackage[ruled,vlined,linesnumbered,commentsnumbered]{algorithm2e}
\usepackage{algcompatible}
\usepackage{amsmath,amssymb,amsfonts}

\usepackage{graphicx}

\newcommand {\ignore} [1] {}

\def\AEC     {{\sc Activation Edge-Cover}}
\def\FL        {{\sc Facility Location}}
\def\MPEC  {{\sc Min-Power Edge-Cover}}
\def\GMC   {{\sc Generalized Min-Covering}}
\def\SC       {{\sc Set-Cover}}

\def\opt  {\mathsf{opt}}

\def\di   {\displaystyle}

\def\subs   {\subseteq}
\def\sem   {\setminus}
\def\empt {\emptyset}

\def\f {\frac}

\def\al {\alpha}
\def\th {\theta}
\def\om {\omega}
\def\bom {\bar{\omega}}
\def\si {\sigma}
\def\eps {\epsilon}

\def\De {\Delta}

\def\a       {{\bf a}}
\def\b      {{\bf b}}
\def\c      {{\bf c}}
\def\d       {{\bf d}}
\def\q       {{\bf q}}
\def\t       {{\bf t}}
\def\w       {{\bf w}}

\begin{document}

\title{Approximating activation edge-cover and \\ facility location problems}

\author{Zeev Nutov \and Eli Shalom} 

\institute{The Open University of Israel. \email{nutov@openu.ac.il, eli.shalom@gmail.com}}

\maketitle

\begin{abstract}
What approximation ratio can we achieve for the {\FL} problem if whenever a client $u$ connects to a facility $v$,
the opening cost of $v$ is at most $\th$ times the service cost of $u$? 
We show that this and many other problems are a particular case of the {\AEC} problem.
Here we are given a multigraph $G=(V,E)$, a set $R \subs V$ of terminals, 
and thresholds $\{t^e_u,t^e_v\}$ for each $uv$-edge $e \in E$.
The goal is to find an assignment $\a=\{a_v:v \in V\}$ to the nodes minimizing $\sum_{v \in V} a_v$,
such that the edge set $E_\a=\{e=uv: a_u \geq t^e_u, a_v \geq t^e_v\}$ activated by $\a$ covers~$R$.
We obtain ratio $1+\om(\th) \approx \ln \th-\ln \ln \th$ for the problem, 
where $\om(\th)$ is the root of the equation $x+1=\ln(\th/x)$ and $\th$ is a problem parameter.
This result is based on a simple generic algorithm for the problem of minimizing 
a sum of a decreasing and a sub-additive set functions, which is of independent interest.
As an application, we get that the above variant of {\FL} admits ratio $1+\om(\th)$;
if for each facility all service costs are identical then we show a better ratio 
$\di 1+\max_{k \geq 1} \f{H_k-1}{1+k/\th}$, where $H_k=\sum_{i=1}^k 1/i$. 
For the {\MPEC} problem we improve the ratio $1.406$ of \cite{CKN} 
(achieved by iterative randomized rounding) to $1+\om(1)<1.2785$.
For unit thresholds we improve the ratio $73/60 \approx 1.217$ of \cite{CKN} to $\f{1555}{1347} \approx 1.155$.
\end{abstract}

\noindent
{\bf Keywords:} 
generalized min-covering problem; activation edge-cover;  
facility location; minimum power; approximation algorithm 

\section{Introduction} \label{s:intro}

Let $G=(V,E)$ be an undirected multigraph where each edge $e \in E$ has an {\bf activating function}
$f^e$ from some range $L^e \subs \mathbb{R}^2_+$ to $\{0,1\}$.
Given a non-negative {\bf assignment} $\a=\{a_v:v \in V\}$ to the nodes, we say that 
a $uv$-edge $e \in E$ is {\bf activated} by $\a$ if $f^e(a_u,a_v)=1$.
Let $E_\a=\{e \in E:f^e(a_u,a_v)=1\}$ denote the set of edges activated by $\a$.
The {\bf value} of an assignment $\a$ is $\a(V)=\sum_{v \in V}a_v$. 
In {\sc Activation Network Design} problems the goal is to find an assignment $\a$ of minimum value,
such that the edge set $E_\a$ activated by $\a$ satisfies a prescribed property.
We refer the reader to a paper of Panigrahi \cite{P} and a recent survey \cite{N-act} on activation problems,
where also the following two assumptions are justified.

\medskip

\noindent
{\bf Monotonicity Assumption.}
{\em For every $e \in E$, $f^e$ is monotone non-dec\-reasing, namely,
$f^e(x_u,x_v)=1$ implies $f^e(y_u,y_v)=1$ if $y_u \geq x_u$ and $y_v \geq x_v$.}

\medskip

\noindent
{\bf Polynomial Domain Assumption.} 
{\em Every $v \in V$ has a polynomial size in $n=|V|$ set $L_v$ of ``levels''
and $L^e=L_u \times L_v$ for every $uv$-edge $e \in E$.}

\medskip

Given a set $R \subs V$ of {\bf terminals} we say that an edge set $J$ is an {\bf $R$-cover}
or that {\bf $J$ covers $R$} if every $v \in R$ has some edge in $J$ incident to it.
In the {\sc Edge-Cover} problem we seek an $R$-cover of minimum value. 
The min-cost version of this problem can be solved in polynomial time \cite{Ed},
and it is one of the most fundamental problems in Combinatorial Optimization, cf. \cite{Sch}.

We consider the {\AEC} problem. 
Since we consider multigraphs, $e=uv$ means that $e$ is a $uv$-edge,
namely, that $u,v$ are the endnodes of $e$; $e=uv \in E$ means that $e \in E$ is a $uv$-edge.
Under the two assumptions above, the problem can be can be formulated without activating functions.
For this, replace each edge $e=uv$ by a set of at most $|L_u| \cdot |L_v|$
$uv$-edges $\{e(t_u,t_v): (t_u,t_v) \in L_u \times L_v, f^e(t_u,t_v)=1\}$.
Then for any $J \subs E$ the optimal assignment $\a$ activating $J$ is given by 
$a_u=\max\{t^e_u: e \in J \mbox{ is incident to } u\}$;
here and everywhere a maximum or a minimum taken over an empty set is assumed to be zero.
Consequently, the problem can be restated as follows. 

\begin{center} \fbox{\begin{minipage}{0.97\textwidth} \noindent
\underline{\AEC} \\
Input: \ A graph $G=(V,E)$, a set of terminals $R \subs V$, and  thresholds $\{t^e_u,t^e_v\}$ for each $uv$-edge $e \in E$. \\
Output: An assignment $\a$ of minimum value $\a(V)=\sum_{v \in V}a_v$,
such that the edge set $E_\a=\{e=uv \in E: a_u \geq t^e_u,a_v \geq t^e_v\}$ activated by $\a$ covers $R$.
\end{minipage}}\end{center}

As we will explain later, {\AEC} problems are among the most fundamental problems in network design,
that include NP-hard problems such as {\sc Set-Cover}, {\FL},
covering problems that arise in wireless networks (node weighted/min-power/installation problems), and many other problems.

To state our main result we define assignments $\q$ and $\c$, where $c_v=q_v=0$ if $v \in V \sem R$ and for $u \in R$:
\begin{itemize}
\item[$\bullet$]
$\di q_u=\min_{e =uv \in E} t^e_u$ is the minimum threshold at $u$ of an edge in $E$ incident to $u$.
\item[$\bullet$]
$\di c_u=\min_{e =uv \in E}(t^e_u+t^e_v)-q_u$, so $c_u+q_u$ is the minimum value of an edge in $E$ incident to $u$.
\end{itemize}
The quantity $\max_{u \in R} c_u/q_u$ is called the {\bf slope} of the instance.
We say that an {\AEC} instance is {\bf $\th$-bounded} if the instance slope is at most $\th$, namely if $c_u \leq \th q_u$ for all $u \in R$;
moreover, we assume by default that $\th=\max_{u \in R} c_u/q_u$ is the instance slope.
For each $u \in R$ let $e_u$ be some minimum value edge covering $u$.
Then $\{e_u: u \in R\}$ is an $R$-cover of value at most $(\c+\q)(R)$.
From this and the definition of $\th$ we get 
$$
0 \leq {\opt}-\q(R) \leq \c(R) \leq \th \q(R) \leq \th {\opt} 
$$
In particular, $(\c+\q)(R) \leq (\th+1){\opt}$. Using this, it is possible to design a greedy algorithm with ratio $1+\ln(\th+1)$. 
We will show how to obtain a better ratio (the difference is quite significant when $\th \leq 10^4$ -- see Table~\ref{tbl:th}), as follows.

\begin{table} [htbp]
\begin{center}
\begin{tabular}{|c|c|c|c|c|c|c|c|c|c|c|c|c|c|}  \hline  
$\th$                         & 1          & 2          & 3          & 4          & 5          & 10        & 100      & 1000    & 10000 & 1000000
\\\hline
$1+\om(\th)$           & 1.2785 & 1.4631 & 1.6036 & 1.7179 & 1.8146 & 2.1569 & 3.6360 & 5.4214 & 7.3603 & 11.4673
\\\hline
$1+\bar{\om}(\th)$ & 1.2167 & 1.3667 & 1.4834 & 1.5800 & 1.6637 & 1.9645 & 3.3428 & 5.0808 & 6.9967 & 11.0820
\\\hline
$\ln \th-\ln\ln \th$ & -           & 1.0597 & 1.0046 & 1.0597 & 1.1336 & 1.4686 & 3.0780 & 4.9752 &  6.9901 & 11.1898
\\\hline
$1+\ln(\th+1)$         & 1.6932 & 2.0987 & 2.3863 & 2.6095 & 2.7918 & 3.3979 & 5.6152 & 7.9088 & 10.2105 & 14.8156
\\\hline
\end{tabular}
\end{center}
\caption{Some numerical bounds on $1+\om(\th)$, $1+\bar{\om}(\th)$, $\ln \th-\ln\ln\th$, and $1+\ln(\th+1)$.}
\label{tbl:th}        \vspace*{-0.5cm}
\end{table}

The {\bf Lambert $W$-Function} (a.k.a. ProductLog Function) $W(z)$ is the inverse function of $f(W)=We^W$.
It is known that for any $\th>0$, $\om(\th)=W(\th/e)$ equals to the (unique) real root of the equation $x+1=\ln(\th/x)$,
and that $\lim_{\th \rightarrow \infty} [1+\om(\th)-(\ln\th-\ln\ln\th)]=0$.
Our main result is:

\begin{theorem} \label{t:EC}  
{\AEC} admits ratio $1+\om(\th)$ for $\th$-bounded instances.
The problem also admits ratio $1+\ln(\De+1)$, and ratio $1+\ln\De$ if $R$ is an independent set in $G$, 
where $\De$ is the maximum number of terminal neighbors of a node in $G$.
\end{theorem}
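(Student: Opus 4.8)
\medskip\noindent\textbf{Proof plan.}
The plan is to analyze one greedy algorithm --- iteratively buy the most cost-effective \emph{star} (a node together with a chosen set of incident edges, all of whose endpoints are raised to the levels those edges need) --- from two angles, the common structural fact being that every inclusion-minimal $R$-cover decomposes into \emph{node-disjoint} stars. Indeed, orient each edge of a minimal cover towards an endpoint it privately covers; a node is privately covered by at most one edge (so has in-degree $\le1$), and a privately-covered node has degree $1$ (a second incident edge would re-cover it); hence grouping each node's out-edges into one star gives node-disjoint stars. Consequently the optimal assignment $\a^*$ activates a minimal $R$-cover whose star decomposition has total ``from-scratch'' cost (the cost of realizing each star alone) at most $\sum_v a^*_v=\opt$, since $a^*_v$ already dominates the level each star asks of $v$.

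For the $1+\om(\th)$ bound I would first pass to a residual problem. Committing to the base assignment $a^{0}_u=q_u$ for $u\in R$, $a^{0}_v=0$ otherwise, is legitimate because any feasible $\a$ has $a_u\ge q_u$ for $u\in R$; its value is $\q(R)\le\opt$ and the residual optimum is $\opt'=\opt-\q(R)$. With ground set $E$, set $h(S)=$ the extra value needed to activate $S$ on top of $a^{0}$ (monotone, sub-additive) and $g(S)=\c(R\setminus R_S)$ where $R_S$ is the set of terminals covered by $S$ (monotone non-increasing and super-modular), so $g(\emptyset)=\c(R)$. The encoding is faithful: from any $S$ one gets a feasible {\AEC} solution of value at most $\q(R)+h(S)+g(S)$ by activating $S$ over $a^{0}$ and then a minimum-value edge $e_u$ for each still-uncovered $u$, which costs an extra $\le c_u$ since $t^{e_u}_u\ge q_u$ and $t^{e_u}_u+t^{e_u}_v=q_u+c_u$. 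Running the greedy over star-atoms: for the node-disjoint stars $\sigma_1,\dots,\sigma_m$ of a minimal $R$-cover activated by $\a^*$ (taken as residual atoms, of total residual cost $\le\opt'$, with $\bigcup_j\sigma_j$ covering $R$), super-modularity of $g$ gives $\sum_j\big(g(S)-g(S\cup\sigma_j)\big)\ge g(S)$ and sub-additivity gives $\sum_j\big(h(S\cup\sigma_j)-h(S)\big)\le\opt'$, so at every stage some available star has density $\big(g(S)-g(S\cup\sigma)\big)/\big(h(S\cup\sigma)-h(S)\big)\ge g(S)/\opt'$. Stopping once no star has density $\ge1$ and then rounding, a standard potential argument ($g$ decays by the factor $e^{-1/\opt'}$ per unit of $h$ spent) yields an {\AEC} solution of value $\le\q(R)+\opt'\big(1+\ln(\c(R)/\opt')\big)$ (and $\le\opt$ when $\c(R)<\opt'$). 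Normalizing $\opt=1$, writing $y=\opt'$ so $\q(R)=1-y$, and using $\th$-boundedness as $\c(R)\le\th\q(R)=\th(1-y)$, the bound is $\le1+y\ln\frac{\th(1-y)}{y}$; maximizing over $y$, the first-order condition gives $\ln\frac{\th(1-y)}{y}=\frac{1}{1-y}$, so the maximum equals $x:=\frac{y}{1-y}$, which satisfies $x+1=\ln(\th/x)$, i.e. $x=\om(\th)$.

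For the degree bounds I would read the same greedy as greedy {\SC} on the instance whose sets are all stars, each priced by its from-scratch cost: by the structural fact its optimum is $\le\opt$, and a star covers at most $\De+1$ terminals (its $\le\De$ terminal neighbors, plus its center if that is a terminal). Greedy {\SC} returns a cover of cost $\le H_{\De+1}\le1+\ln(\De+1)$ times the optimum, and the assignment it induces has value no larger than this cost (a node's level is the maximum, not the sum, of what the chosen stars ask of it). If $R$ is an independent set, a star center that has terminal neighbors is not itself a terminal, so every star covers at most $\De$ terminals, improving the bound to $H_{\De}\le1+\ln\De$.

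The main obstacle I anticipate is carrying out the two reductions rigorously: proving the structural fact as sketched, showing the best star can be found in polynomial time despite there being exponentially many edge-subsets (here the Polynomial Domain Assumption is used), and making the density estimate together with the $e^{-1/\opt'}$ decay step of the greedy match the stated closed form; the closing Lambert-$W$ identity is then a routine optimization.
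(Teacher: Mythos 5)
Your proposal is correct and follows essentially the same route as the paper: the paper packages your residual potential $\c(R\sem R_{\q+\a})$ (with the fixed part $\q(R)$ folded in) as an instance of an abstract {\GMC} problem, proves the same density-greedy bound $\q(R)+\opt'\bigl(1+\ln(\c(R)/\opt')\bigr)$ via Darboux sums, uses the same node-disjoint-star decomposition both for the averaging step and for the polynomial-time minimum-density search (guess the root and its level), and performs the identical Lambert-$W$ optimization after the substitution $x=\opt'/\q(R)$. The only divergence is cosmetic: for the degree bounds the paper simply plugs $\c(R)\le(\De+1)\tau^*$ into the same greedy bound (using $\tau^*\le\opt$), whereas you rerun a Chv\'atal-style $H_{\De+1}$ analysis of greedy {\SC} over stars --- both are valid and give $1+\ln(\De+1)$, resp.\ $1+\ln\De$ when $R$ is independent.
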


This result is based on a generic simple approximation algorithm for the problem of minimizing 
a sum of a decreasing and a sub-additive set functions, which is of independent interest;
it is described in the next section. This result is inspired by the so called ``Loss Contraction Algorithm'' of 
Robins \& Zelikovsky \cite{RZ} for the {\sc Steiner Tree} problem,
and the analysis in \cite{GHNP} of this algorithm. 

Let us say that $v \in V$ is a {\bf steady node} if the thresholds $t^e_v$ 
of the edges $e$ adjacent to $v$ are all equal the same number $w_v$, which we call the {\bf weight of~$v$}. 
Note that we may assume that all non-terminals are steady, by replacing each $v \in V \sem R$ by $L_v$ new nodes;
see the so called ``Levels Reduction'' in \cite{N-act}.
This implies that we may also assume that no two parallel edges are incident to the same non-terminal.
Clearly, we may assume that $R \sem V$ is an independent set in $G$.
Let {\sc Bipartite} {\AEC} be the restriction of {\AEC} to instances when also $R$ is an independent set, 
namely, when $G$ is bipartite with sides $R,V \sem R$.
Note that in this case $G$ is a simple graph and all non-terminals are steady.

We now mention some particular threshold types in {\AEC} problems, 
some known problems arising from these types,  
and some implications of Theorem~\ref{t:EC} for these problems. 

\paragraph{\sc Weighted Set-Cover} \ \\
This is a particular case of {\sc Bipartite} {\AEC} when all nodes are steady 
and nodes in $R$ have weight $0$.
Note that in this case $\th$ is infinite,
and we can only deduce from Theorem~\ref{t:EC} the known ratio $1+\ln \De$. 
Consider a modification of the problem, which we call {\sc $\th$-Bounded Weighted Set-Cover}:
when we pick a set $v \in V \sem R$, we need to pay $w_v/\th$ for each element in $R$ covered by $v$.
Then the corresponding {\AEC} instance is $\th$-bounded.

\paragraph{\FL} \ \\
Here we are given a bipartite graph with sides $R$ (clients) and $V \sem R$ (facilities),
weights (opening costs) $\w=\{w_v:v \in V \sem R\}$, and 
distances (service costs) $\d=\{d_{uv}:u \in R, v \in V \sem R\}$.
We need to choose $S \subs V \sem R$ with $\w(S)+\sum_{u \in R} d(u,S)$ minimal,
where $d(u,S)=\min_{v \in S} d_{uv}$ is the minimal distance from $u$ to $S$.
This is equivalent to {\sc Bipartite} {\AEC}. 
Note however that if for some constant $\th$ we have 
$w_v \leq \th d_{uv}$ for all $uv \in E$ with $u \in R$ and $v \in V \sem R$,
then the corresponding {\sc Bipartite} {\AEC} instance is $\th$-bounded, 
and achieves a low constant ratio even for large values of $\th$.

\paragraph{\sc Installation Edge-Cover} \ \\
Suppose that the installation cost of a wireless network 
is proportional to the total height of the towers for mounting antennas. 
An edge $uv$ is activated if the towers at $u$ and $v$ are tall enough to overcome obstructions 
and establish line of sight between the antennas.  
This is modeled as each pair $u,v \in V$ has a height demand $h^{uv}$ 
and constants $\gamma_{uv},\gamma_{vu}$, such that a $uv$-edge is activated 
by $\a$ if the scaled heights $\gamma_{uv}a_u, \gamma_{vu}a_v$ sum to at least $h^{uv}$. 
In the {\sc Installation Edge-Cover} problem, we need to assign heights to the antennas such
that each terminal can communicate with some other node, while minimizing the total sum of the heights.
The problem is {\sc Set-Cover} hard even for $0,1$ thresholds and bipartite $G$ \cite{P}.
But in a practical scenario, the quotient of the maximum tower height over the minimum tower height is usually bounded by a constant;
say, if possible tower heights are $5,15,20$, then the slope is $\th=4$.

\paragraph{\MPEC} \ \\
This problem is a particular case of {\AEC} when $t^e_u=t^e_v$ for every edge $e=uv \in E$;
note that $\th=1$ in this case (in fact, the case $\th=1$ is much more general). 
The motivation is to assign energy levels to the nodes of a wireless network
while minimizing the total energy consumption, and enabling communication for every terminal.
The {\MPEC} problem is NP-hard even if $R=V$, or if
$R$ is an independent set in the input graph $G$ and unit thresholds \cite{HKMN}.
The problem admits ratio $2$ by a trivial reduction to the min-cost case. 
This was improved to $1.5$ in \cite{KN-cov}, and then to $1.406$ in \cite{CKN}, 
where is also given ratio $73/60$ for the bipartite case and for unit thresholds.

\medskip

From Theorem~\ref{t:EC} and the discussion above we get:

\begin{corollary} \label{c:main}
{\MPEC} admits ratio $1+\om(1) < 1.2785$,
and the $\th$-bounded versions of each one of the problems 
{\sc Weighted Set-Cover}, {\FL}, and {\sc Installation Edge-Cover},
admits ratio $1+\om(\th)$.
\end{corollary}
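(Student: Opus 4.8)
The plan is to read the corollary off Theorem~\ref{t:EC}: each of the four problems is a special case of, or equivalent to, {\AEC}, so all that remains is to fix the appropriate thresholds and read off the slope $\max_{u\in R}c_u/q_u$ of the resulting instance.

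I would first dispose of {\MPEC}, where $t^e_u=t^e_v$ for every edge $e=uv$. Then for each terminal $u$ we have $\min_{e=uv}(t^e_u+t^e_v)=2\min_{e=uv}t^e_u=2q_u$, hence $c_u=2q_u-q_u=q_u$; thus the instance slope equals $\th=1$, and Theorem~\ref{t:EC} gives ratio $1+\om(1)$. It then remains only to record the numeric value: $\om(1)=W(1/e)$ is the root of $x+1+\ln x=0$, which is $\approx 0.27846$, so $1+\om(1)<1.2785$ (cf.\ Table~\ref{tbl:th}).

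For the three ``$\th$-bounded'' problems I would invoke the reductions recalled just before the corollary, each realizing the problem as {\sc Bipartite} {\AEC}, and check that the boundedness hypothesis forces the slope to be at most $\th$. For {\FL}: take $R$ the clients, $V\sem R$ the facilities, $t^{uv}_v=w_v$ (so each facility is steady of weight $w_v$) and $t^{uv}_u=d_{uv}$; a minimal $R$-cover assigns each client to exactly one facility, so an assignment activating it has value $\w(S)+\sum_u d(u,S)$, which is exactly the {\FL} objective, and moreover $q_u=\min_v d_{uv}$, so choosing $v^*$ with $d_{uv^*}=q_u$ gives $c_u=\min_v(d_{uv}+w_v)-q_u\le w_{v^*}\le\th q_u$ under the hypothesis $w_v\le\th d_{uv}$. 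For {\sc $\th$-Bounded Weighted Set-Cover}: take $R$ the ground set, $V\sem R$ the sets, $uv\in E$ iff $u\in v$, $t^{uv}_v=w_v$ and $t^{uv}_u=w_v/\th$; the value of an assignment activating a minimal cover is the opening cost plus $w_v/\th$ per element assigned to $v$, which is the modified objective, and since $w_v/\th+w_v$ is increasing in $w_v$ one gets $q_u=w_{v_0}/\th$ and $c_u=w_{v_0}$ with $w_{v_0}=\min_{v\ni u}w_v$, so the slope is exactly $\th$. {\sc Installation Edge-Cover} is handled the same way, using the Polynomial Domain and Levels reductions to turn each constraint $\gamma_{uv}a_u+\gamma_{vu}a_v\ge h^{uv}$ into a family of $uv$-edges whose threshold pairs lie in the admissible-height sets $L_u,L_v$, so that a bound of $\th$ on the ratio of the largest to the smallest admissible height bounds the slope. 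Since $\om$ is monotone nondecreasing, in all three cases Theorem~\ref{t:EC} then yields ratio $1+\om(\th)$.

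The argument is short, and the one point that deserves care is verifying, case by case, that each reduction is value-preserving, that is, that restricting attention to minimal $R$-covers lets us identify $\sum_{v\in V}a_v$ with the original objective, and that the bound on the natural parameter of the problem (the opening-to-service cost ratio, or the tower-height ratio) genuinely controls the slope $c_u/q_u$ rather than some larger quantity. Once these one-line computations are in place, everything else is a direct appeal to Theorem~\ref{t:EC}.
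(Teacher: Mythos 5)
Your proposal is correct and follows the same route as the paper, which simply derives the corollary from Theorem~\ref{t:EC} together with the reductions described in the preceding discussion; your case-by-case verification that each reduction is value-preserving and that the problem parameter bounds the slope (e.g.\ $c_u=q_u$ for {\MPEC}, $c_u\le w_{v^*}\le\th q_u$ for {\FL}) just makes explicit what the paper leaves implicit.
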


Let us illustrate this result on the {\FL} problem.
One might expect a constant ratio for any $\th>0$, but our ratio $1+\om(\th)$ is surprisingly low.
Even if $\th=100$ (service costs are at least $1\%$ of opening costs) then we get a small ratio $1+\om(100)<3.636$.
Even for $\th=10^4$ we still get a reasonable ratio $1+\om(10^4)<7.3603$.
All previous results for the problem are usually summarized by just two observations: 
the problem is {\sc Set-Cover} hard (so has a logarithmic approximation threshold by \cite{RS,Feige}),
and that it admits a matching logarithmic ratio $1+\ln |R|$ \cite{Chva}; 
see surveys on {\FL} problems by Vygen \cite{Vy} and Shmoys \cite{Shmoys}.
Due to this, all work focused on the more tractable {\sc Metric} {\FL} problem.
Our Theorem~\ref{t:EC} implies that many practical non-metric {\FL} instances admit a reasonable small constant ratio. 

For the case of ``locally uniform'' thresholds -- 
when for each non-terminal (facility) all thresholds (service costs) are identical, we show a better ratio,
see also Table~\ref{tbl:th}. In what follows, let $H_k$ denote the $k$-th harmonic number.

\begin{theorem} \label{t:lu}
{\sc Bipartite} {\AEC} with locally uniform thres\-holds admits ratio $1+\bom(\th)$, where $\bom(\th)=\di \max_{k \geq 1} \f{H_k-1}{1+k/\th}$.
\end{theorem}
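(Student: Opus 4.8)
The plan is to run the same generic algorithm that proves Theorem~\ref{t:EC} — the one minimizing a sum of a decreasing and a sub-additive set function — on the bipartite locally uniform instance (for {\sc Bipartite}~{\AEC}, $G$ is simple and all non-terminals are steady), and to extract a sharper per-step bound. As for Theorem~\ref{t:EC}, on edge sets $J$ of $G$ (with facilities $V\sem R$ and clients $R$) take the sub-additive function to be $\tau(J)$, the value of the cheapest assignment activating $J$ — sub-additive because that assignment is the node-wise maximum of thresholds and $\max$ is sub-additive on $\mathbb{R}_+$ — and the decreasing function to be $\nu(J)=\sum\{c_u+q_u:u\in R\mbox{ is uncovered by }J\}$, the cost of finishing $J$ with single cheapest edges. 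Then $\tau(\empt)+\nu(\empt)=(\c+\q)(R)\le(\th+1)\opt$, an optimal $R$-cover has $\tau+\nu=\opt$, and by sub-additivity of $\tau$ any $J$ completes to an $R$-cover of value $\le\tau(J)+\nu(J)$; so it suffices to bound the algorithm's output by $(1+\bom(\th))\opt$, which its guarantee reduces to controlling the per-client ``density'' of its improving moves.

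Local uniformity supplies two facts. First, an improving step opens a single facility $v$ (weight $w_v$, common service cost $d_v$) and covers a whole batch $B$ of currently uncovered clients at the uniform per-client price $w_v/|B|+d_v$. Second, since ``add the single cheapest edge covering one uncovered client $u$'' is always an available move, the price paid for $u$ is at most $c_u+q_u\le(\th+1)q_u$. Now charge against the optimum's star decomposition: for an optimal star $\{v',A_{v'}\}$ of size $k=|A_{v'}|$, local uniformity makes its contribution to $\opt$ exactly $w_{v'}+k d_{v'}$, each $u\in A_{v'}$ has $q_u\le d_{v'}$, and ordering $A_{v'}$ by the time the algorithm covers its clients, the client in the $i$-th position from the last is charged at most $\min\{w_{v'}/i+d_{v'},\,(\th+1)d_{v'}\}$ — the residual of $v'$ gives the first term, the single-edge fallback gives the second. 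When $w_{v'}\le\th d_{v'}$ the first term always wins, so summing over $i=1,\dots,k$ and dividing by $w_{v'}+k d_{v'}$ gives a per-star ratio at most $\f{w_{v'}H_k+k d_{v'}}{w_{v'}+k d_{v'}}$, which increases with $w_{v'}$ and hence is at most its value $\f{\th H_k+k}{\th+k}=1+\f{H_k-1}{1+k/\th}$ at $w_{v'}=\th d_{v'}$. The maximum of this over integers $k$ is $1+\bom(\th)$, and it is attained (so $\bom(\th)$ is well-defined and finite) because $H_k=O(\log k)$ while $1+k/\th$ grows linearly.

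The step I expect to be the main obstacle is the optimal stars with $w_{v'}>\th d_{v'}$. For such a star one can still show that the optimum serves at least $w_{v'}/(\th d_{v'})$ clients by $v'$ (otherwise replacing $v'$ by single cheapest edges, which avoid $v'$ precisely when $w_{v'}>\th d_{v'}$, would be cheaper), but even with this the crude estimate above only yields $1+\om(\th)$, not the strictly smaller $1+\bom(\th)$: the same break-even computation performed with a \emph{continuous} weight-to-distance ratio has its critical point exactly at $\om(\th)$, and deleting the high-weight facilities outright loses a factor of order $\th$. Getting $1+\bom(\th)$ therefore needs a finer accounting for these stars — presumably one that amortizes $w_{v'}$ over the batch the algorithm actually opens for $v'$, since a high-weight facility gets opened while many of its neighbours are still uncovered and so its clients are charged far less than $w_{v'}/i+d_{v'}$ — together with a convexity/monotonicity argument that neither this amortization, nor blending several star sizes, nor letting the clients of one star carry different weight-to-distance ratios, can beat $\max_{k\ge1}\f{H_k-1}{1+k/\th}$. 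One also has to reconcile the per-star picture with the generic algorithm's amortized accounting (optimal stars get partially covered by earlier steps, so $i$ runs only over the residual part of each star, and a step's $\nu$-savings must be credited against its $\tau$-cost) and check the degenerate case $d_{v'}=0$, i.e. weighted set cover, where the harmonic sum is $w_{v'}H_k$ and one instead uses $k\le\De$ to recover the $\ln\De$ bound.
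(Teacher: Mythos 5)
Your first two paragraphs are, in essence, the paper's entire proof. The paper runs the natural greedy algorithm that repeatedly picks a minimum-average-price star, and charges each optimal star with center $v'$, weight $w$, service threshold $t$ and $k$ leaves via exactly your computation: the $i$-th-from-last leaf of that star is charged at most $(w+it)/i=w/i+t$ (the residual substar is always an available candidate), the total charge $wH_k+kt$ is compared with the optimum's $w+kt$ for that star, and the quotient $\f{(w/t)H_k+k}{(w/t)+k}$ is increasing in $w/t$, hence at most $\f{\th H_k+k}{\th+k}=1+\f{H_k-1}{1+k/\th}$. Summing over the node-disjoint stars of an optimal solution gives the ratio. (The paper phrases this as a direct set-cover-style charging rather than through the {\GMC} machinery of Section~\ref{s:GMC}, but that difference is cosmetic.)

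The ``main obstacle'' of your third paragraph --- optimal stars with $w_{v'}>\th d_{v'}$ --- does not exist under the theorem's hypothesis, so the finer amortization you call for is not needed. For locally uniform instances the paper defines $\th$-boundedness at the start of Section~\ref{s:lu} as the \emph{per-facility} condition $w_v \leq \th t^v$ for every non-terminal $v$, not as the per-terminal slope $\max_{u}c_u/q_u$ of the Introduction. Under that reading every optimal star satisfies $w_{v'}\leq\th d_{v'}$, the first branch of your $\min$ always applies, and your argument is already complete; the degenerate case $d_{v'}=0$ then forces $w_{v'}=0$ and is trivial. You are right that the two notions of $\th$ differ in general (a facility can have $w_v/t^v$ far exceeding the slope of every terminal it serves), and that with the per-terminal definition the argument as given would not close --- but that is a definitional mismatch to flag, not a missing case to prove. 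So: correct proof, same as the paper's, with one vacuous case mistakenly left open.
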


We do not have a convenient formula for $\bar{\om}(\th)$,
but in Section~\ref{s:lu} we observe that the maximum is attained for the smallest integer $k_\th$ such that $H_k \geq 2+\f{\th-1}{k+1}$.
We will show that $\bar{\om}(\th)<\om(\th)$ for all $\th$, and 
that $\lim_{\th \rightarrow \infty} [\om(\th)-\bar{\om}(\th)]=0$,
so both $\om(\th)$ and $\bar{\om}(\th)$ are close to $\ln \th-\ln\ln \th$ for large values of $\th$, although the convergence is very slow;
see also Table~\ref{tbl:th}.

We will also show that $\bar{\om}(1) =73/60$.
Note that our Theorem~\ref{t:EC} ratio $1.2785$ for $\th=1$ significantly improves 
the previous best ratio $1.406$ of \cite{CKN} for {\MPEC} on general graphs 
achieved by iterative randomized rounding; we do not match the ratio $73/60$ of \cite{CKN} 
for the bipartite case, but note that the case $\th=1$ is much more general than the min-power 
case considered in \cite{CKN}.

Theorem~\ref{t:lu} has some applications for the {\SC} problem.
Given a {\SC} instance represented by a bipartite graph, obtain a {\sc Bipartite} {\AEC} instance by assigning 
unit weights to sets (non-terminals) and a threshold $\eps$ to every terminal (element).
These threshold are locally uniform and the obtained instance has slope $\th=1/\eps$.
For this instance the Theorem~\ref{t:lu} algorithm coincides with the standard greedy algorithm,
and computes a {\SC} solution of size 
$\leq (1+\bar{\om}(1/\eps))(\tau+n\eps)-n\eps=\tau+\bar{\om}(1/\eps)(\tau+n\eps)$,
where $\tau$ is the optimal size of a {\SC} solution. 
Substituting $\eps=\f{\tau}{nM}$ and dividing by $\tau$ we get that
for any $M>0$ the greedy algorithm for {\SC} achieves ratio $1+\bom(nM/\tau)(1+1/M)$.
We note that Slavik \cite{SL} proved that the greedy algorithm for {\SC} achieves ratio 
$\ln n-\ln\ln n+\Theta(1)$, while our ratio for locally uniform thresholds is $1+\bar{\om}(\th) = \ln \th-\ln\ln \th+\Theta(1)$;
we will discuss the relation between these two results in the full version.

In addition, we consider unit thresholds, and 
using some ideas from \cite{CKN} improve the previous best ratio $73/60$ of \cite{CKN} as follows.

\begin{theorem} \label{t:unit}
{\AEC} with unit thresholds admits ratio $\f{1555}{1347}$.
\end{theorem}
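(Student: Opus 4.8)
The plan is to reduce unit-threshold {\AEC} to a clean covering problem and then beat the greedy $\f{73}{60}$ bound of Theorem~\ref{t:lu} by blending greedy with an LP-based randomized rounding step in the spirit of \cite{CKN}. For the reduction, with unit thresholds we may take an optimal assignment to be $0/1$-valued, so a feasible solution is just a node set $S$; since each terminal needs an incident activated edge it needs $a_v \ge 1$, hence $R \subs S$ is forced and every terminal with a neighbor in $R$ is automatically covered once $R$ is activated. Therefore, letting $R_0 \subs R$ be the terminals with no neighbor in $R$, the problem becomes: choose a minimum set $X \subs V\sem R$ of non-terminals dominating $R_0$, and ${\opt}=|R|+\tau$ where $\tau$ is the optimum size of such an $X$. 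The essential feature is that in a minimal solution every chosen non-terminal covers at least one terminal of $R_0$, so $\tau \le |R_0| \le |R|$; thus the forced term $|R|$ is never smaller than the ``set-cover part'', which is what makes constant ratios possible at all.

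Now run two procedures and output the cheaper one. The first is the greedy algorithm behind Theorem~\ref{t:lu} applied to $R_0$: writing $r_k$ for the number of terminals of $R_0$ that the optimum covers via a non-terminal covering exactly $k$ of them, so ${\opt}=|R|+\sum_{k\ge1} r_k/k$, the standard greedy bound gives a solution of size at most $|R|+\sum_{k\ge1}\f{H_k}{k}r_k$, which is $\le \f{73}{60}\,{\opt}$ with equality forced only when the mass concentrates at star size $k=4$ (since $\f{73}{60}=\max_{k\ge1}\f{k+H_k}{k+1}$). The second procedure solves the natural LP $\min\{\,|R|+\sum_w x_w \,:\, \sum_{w\ni v}x_w\ge 1\ \forall v\in R_0,\ x\ge0\,\}$ and applies the iterative randomized rounding of \cite{CKN}: include each non-terminal $w$ with probability $\min(1,\beta x_w)$, greedily cover the terminals still uncovered, and recurse on the residual instance; since the base term $|R|$ dominates, scaling $\beta$ trades the rounded term $\beta\sum_w x_w$ against the residual, and the resulting guarantee is good precisely in the regime of ``large'' optimum stars where greedy is weak.

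The combination then works because the two weaknesses do not overlap: greedy is within $\f{73}{60}$ and is strictly better away from $k\approx4$, while the rounding procedure is strong exactly around $k\approx4$ (there a near-integral LP solution recovers the optimum up to the rounding factor); optimizing $\beta$ and the recursion schedule against the $r_k$-parametrized greedy bound turns the worst case of $\min(\text{greedy},\text{rounding})$ into a small linear/convex program whose value is $\f{1555}{1347}$. I expect the main difficulty to be two-fold: first, establishing the precise per-instance guarantee of the rounding step — in particular the contribution of non-terminals with $\beta x_w\ge1$, the cost of the greedy ``clean-up'', derandomization, and how much the iteration buys; and second, the factor-revealing optimization that pins the combined ratio to exactly $\f{1555}{1347}$ rather than merely ``some constant below $\f{73}{60}$'', which is the step that manufactures the odd constant and needs a careful choice of which quantities of the fractional and integral optima are charged against one another.
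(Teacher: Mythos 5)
Your reduction to ``choose a minimum set of non-terminals dominating the uncovered terminals, plus the forced term $|R|$'' is the same as the paper's, and your first procedure (greedy) correctly reproduces the $\f{73}{60}$ baseline with the worst case at star size $k=4$. But the mechanism you propose for beating $\f{73}{60}$ --- an LP-based iterative randomized rounding that is ``strong exactly around $k\approx 4$'' --- is unsupported and, as stated, does not work. The natural covering LP has integrality gap $H_k$ on instances whose sets have size at most $k$, so on the $k=4$ bottleneck the best bound you can extract from the LP for the set-cover part is $H_4\tau$, giving $\f{r+H_4\tau}{r+\tau}\leq\f{4+H_4}{5}=\f{73}{60}$ when $\tau=r/4$ --- exactly the greedy bound, not an improvement. (Recall also that in \cite{CKN} the iterative randomized rounding yields $1.406$ for the general case while $\f{73}{60}$ for the bipartite case comes from greedy; rounding is not the stronger tool in this regime.) You acknowledge that the ``factor-revealing optimization'' pinning the ratio to $\f{1555}{1347}$ is missing, but this is not a technicality: it is the entire content of the theorem, and nothing in your setup forces that particular constant.

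The idea you are missing is where the paper actually gets its gain: the combinatorial $k$-{\SC} approximation algorithms (semi-local optimization and packing-based methods) with ratios $\al_3=\f{4}{3}$, $\al_4=\f{73}{48}$, $\al_5=\f{26}{15}$, $\al_6=\f{28}{15}$, all strictly below $H_k$. The paper's algorithm greedily strips node-disjoint $(k+1)$-stars for $k=\De$ down to $0$, and after each stage runs an $\al_k$-approximation on the residual $k$-{\SC} instance, returning the cheapest of the candidates $C_{k+1}\cup A_k$; stars with $k\geq 7$ leaves are handled by a local-ratio step of factor $\f{k+1}{k}\leq\f{8}{7}$. Assuming every candidate exceeds $\rho(r+\tau)$ yields seven inequalities which, using $|C_1|+\cdots+|C_6|=r$, combine into a contradiction precisely when $\rho=\f{7\al_6-\si}{6\al_6-\si+1}$ with $\si=\al_1+\cdots+\al_5=\f{1581}{240}$, i.e.\ $\rho=\f{1555}{1347}$. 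The constant is thus a direct function of the known $\al_k$ values, which appear nowhere in your argument; without them (or some other concrete sub-$H_k$ guarantee for small star sizes) your two-procedure combination has no proven value below $\f{73}{60}$.
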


We note that our main contribution is not technical, although some proofs are non-trivial
(the reader may observe that proofs of many seemingly complicated  results 
were substantially simplified with years, by additional effort).
Our main contribution is giving a unified algorithm for a large class of problems that we identify -- {\sc $\th$-Bounded} {\AEC} problems,
either substantially improving known ratios, or showing that many seemingly {\sc Set-Cover} hard problems
may be tractable in practice.
Let us also point out that our main result is more general than the applications listed in Corollary~\ref{c:main}.
The generalization to $\th$-bounded {\AEC} problems is different from earlier results; 
besides finding a unifying algorithmic idea generalizing and improving previous results, 
we are also able to find tractable special cases in a new direction.

The rest of this paper is organized as follows.
In Section~\ref{s:GMC} we define the {\GMC} problem and analyze a greedy algorithm for it, see Theorem~\ref{t:ga}.
In Section~\ref{s:EC} we use Theorem~\ref{t:ga} to prove Theorem~\ref{t:EC}.
Theorems \ref{t:lu} and \ref{t:unit} are proved using a modified method in Sections \ref{s:lu} and \ref{s:unit}, respectively.

\section{The {\GMC} problem} \label{s:GMC}

A set function $f$ is 
{\bf increasing} if $f(A) \leq f(B)$ whenever $A \subs B$; $f$ is {\bf decreasing} if $-f$ is increasing, and $f$ 
is {\bf sub-additive} if $f(A \cup B) \leq f(A)+f(B)$ for any subsets $A,B$ of the ground-set.
Let us consider the following algorithmic problem:

\medskip \begin{center} \fbox{\begin{minipage}{0.96\textwidth} \noindent
\underline{{\GMC}} \\
{\em Input:} \ 
Non-negative set functions $\nu,\tau$ on subsets of a ground-set $U$ such that 
$\nu$ is decreasing, $\tau$ is sub-additive, and $\tau(\empt)=0$. \\
{\em Output:}
$A \subs U$ such that $\nu(A)+\tau(A)$ is minimal.
\end{minipage}}\end{center} \medskip

The ``ordinary'' {\sc Min-Covering} problem is $\min\{\tau(A):\nu(A)=0\}$;
it is a particular case of the {\GMC} problem
when we seek to minimize $M\nu(A)+\tau(A)$ for a large enough constant $M$.
Under certain assumptions, the {\sc Min-Covering} problem admits ratio $1+\ln \nu(\empt)$ \cite{Joh}.
Various generic covering problems are considered in the literature,
among them the {\sc Submodular Covering} problem \cite{Wol}, and several other types, cf. \cite{CKCV}. 
The variant we consider is inspired by the algorithms of 
Robins \& Zelikovsky \cite{RZ} for the {\sc Steiner Tree} problem,
and the analysis in \cite{GHNP} of this algorithm;
but, to the best of our knowledge, the explicit formulation of the {\GMC} problem given here is new.
Interestingly, our ratio for {\MPEC} is the same as that of \cite{RZ} 
for {\sc Steiner Tree} in quasi-bipartite graphs.

We call $\nu$ the {\bf potential} and $\tau$ the {\bf payment}.
The idea behind this interpretation and the subsequent greedy algorithm is as follows.
Given an optimization problem, the potential $\nu(A)$ is the value of some ``simple'' augmenting feasible solution for $A$. 
We start with an empty set solution, and iteratively try to decrease the potential by adding a set $B \subs U \sem A$ of minimum ``density'' --
the price paid for a unit of the potential. The algorithm terminates when the price $\geq 1$, since then we gain nothing from adding $B$ to $A$.
The ratio of such an algorithm is bounded by $1+\ln \f{\nu(\empt)}{\opt}$
(assuming that during each iteration a minimum density set  can be found in polynomial time).
So essentially the greedy algorithm converts ratio $\al=\f{\nu(\empt)}{\opt}$ into ratio $1+\ln \al$.
However, sometimes a tricky definition of the potential and the payment functions may lead to a smaller ratio.

Let ${\opt}$ be the optimal solution value of a problem instance at hand. Fix an optimal solution $A^*$.
Let $\nu^*=\nu(A^*)$, $\tau^*=\tau(A^*)$, so ${\opt}=\tau^*+\nu^*$. 
The quantity $\f{\tau(B)}{\nu(A)-\nu(A \cup B)}$ is called the {\bf density} of $B$ (w.r.t. $A$); this is the price 
paid by $B$ for a unit of potential.  
The {\sc Greedy Algorithm} (a.k.a. Relative Greedy Heuristic) for the problem starts with $A=\empt$ and while $\nu(A)>\nu^*$ 
repeatedly adds to $A$ a non-empty augmenting set $B$ that satisfies the 
following condition, while such $B$ exists:

\medskip \noindent
{\bf Density Condition:}  \ 
$\di \f{\tau(B)}{\nu(A)-\nu(A \cup B)} \leq \min\left\{1,\f{\tau^*}{\nu(A)-\nu^*}\right\}$.
\medskip

Note that since $\nu$ is decreasing
$\nu(A)-\nu(A \cup A^*) \geq \nu(A)-\nu(A^*)=\nu(A)-\nu^*$; hence if $\nu(A)>\nu^*$, then 
$\f{\tau(A^*)}{\nu(A)-\nu(A \cup A^*)} \leq \f{\tau^*}{\nu(A)-\nu^*}$
and there exists an augmenting set $B$ that satisfies the condition  
$\f{\tau(B)}{\nu(A)-\nu(A \cup B)} \leq \f{\tau^*}{\nu(A)-\nu^*}$, e.g., $B=A^*$.
Thus if $B$ is a minimum density set and $\f{\tau(B)}{\nu(A)-\nu(A \cup B)} \leq 1$,
then $B$ satisfies the Density Condition; otherwise, no such $B$ exists. 

\begin{theorem} \label{t:ga}
The {\sc Greedy Algorithm} achieves approximation ratio
$$
1+\f{\tau^*}{\opt} \ln \f{\nu_0-\nu^*}{\tau^*} =
1+\f{\tau^*}{\opt} \cdot \ln \left(1+\f{\nu_0-{\opt}}{\tau^*} \right) \ .
$$
\end{theorem}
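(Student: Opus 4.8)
The plan is to track the potential $\nu(A)$ across iterations and set up a recursion that, in the continuous limit, integrates to a logarithm. Write $\nu_0=\nu(\empt)$, and let $\nu_i$ denote the value of the potential after the $i$-th iteration of the Greedy Algorithm, with $B_i$ the set added in that iteration and $\tau_i=\tau(B_i)$. Since the algorithm adds $B_i$ only when the Density Condition holds, we have
$$
\tau_i \leq \frac{\tau^*}{\nu_{i-1}-\nu^*}\,(\nu_{i-1}-\nu_i)\ ,
$$
and by sub-additivity of $\tau$ together with $\tau(\empt)=0$, the total payment of the final set $A$ is at most $\sum_i \tau_i$. The first step is therefore to bound $\sum_i \tau_i$ by comparing the sum $\sum_i \frac{\nu_{i-1}-\nu_i}{\nu_{i-1}-\nu^*}$ to the integral $\int_{\nu^*}^{\nu_0}\frac{dx}{x-\nu^*}=\ln\frac{\nu_0-\nu^*}{\nu_T-\nu^*}$, where $\nu_T$ is the potential when the loop exits; a standard telescoping/monotonicity argument (each term is at most $\int$ of $\frac{dx}{x-\nu^*}$ over the corresponding interval, since the integrand is decreasing in $x$) gives $\sum_i\tau_i \leq \tau^*\ln\frac{\nu_0-\nu^*}{\nu_T-\nu^*}$.

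The second step is to handle the terminal behaviour, i.e.\ the "last" iteration and the exit condition. The loop runs while $\nu(A)>\nu^*$; it may exit either because $\nu(A)$ has dropped to $\nu^*$, or because no augmenting set of density $\leq 1$ exists. In the first case one must be careful that the last interval does not blow up the integral bound; the cleanest route is to stop the integral at the start of the last iteration and bound that single term's payment directly by $\tau^* $ (its density is $\leq 1$, and the potential drops by at most $\nu_{i-1}-\nu^* \leq \nu_0 - \nu^*$... actually one argues the last term contributes at most $\tau^*$, or one uses that once $\nu_{i-1}-\nu^* \le \tau^*$ the Density Condition's right-hand side is governed by the $1$ rather than $\tau^*/(\nu_{i-1}-\nu^*)$). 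In the second case, when the loop exits with $\nu(A)=\nu_T>\nu^*$ because every augmenting $B$ has density exceeding $1$ — in particular $B=A^*$ has density exceeding... no, $B=A^*$ always has density $\leq \tau^*/(\nu(A)-\nu^*)$, so the loop can only exit this way if $\tau^*/(\nu_T-\nu^*)>1$, i.e.\ $\nu_T-\nu^*<\tau^*$. Either way, one concludes that the residual potential $\nu_T$ satisfies $\nu_T - \nu^* \le \tau^*$, so that $\sum_i \tau_i \le \tau^* \ln\frac{\nu_0-\nu^*}{\tau^*}$ after accounting for the last step, and the total cost of the solution $A$ is
$$
\nu(A)+\tau(A) \ \leq\ \nu_T + \sum_i \tau_i \ \leq\ \nu^* + \tau^* + \tau^*\ln\frac{\nu_0-\nu^*}{\tau^*}\ .
$$

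The final step is arithmetic: divide by ${\opt}=\nu^*+\tau^*$ to get ratio $1+\frac{\tau^*}{\opt}\ln\frac{\nu_0-\nu^*}{\tau^*}$, and rewrite $\nu_0-\nu^* = (\nu_0-{\opt})+\tau^*$ to obtain the second displayed form $1+\frac{\tau^*}{\opt}\ln\bigl(1+\frac{\nu_0-{\opt}}{\tau^*}\bigr)$. The main obstacle I anticipate is not the integral estimate — that is routine — but the careful treatment of the last iteration and the exit condition, namely showing that the residual $\nu_T-\nu^*$ is absorbed into a single additive $\tau^*$ term rather than contributing a divergent $\ln$ factor; this is exactly the place where the $\min\{1,\cdot\}$ in the Density Condition does its work, and it must be handled cleanly to land the stated bound without slack. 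A secondary subtlety is justifying that a minimum-density augmenting set can be found when one exists (the theorem statement implicitly assumes polynomial-time density minimization), and that the process terminates — the latter follows because each iteration strictly decreases the potential by a definite amount, or one appeals to the polynomial domain structure in the applications.
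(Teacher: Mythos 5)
Your proposal follows essentially the same route as the paper's proof: bound each iteration's payment via the Density Condition, compare the resulting sum to an integral in the potential variable, and use the termination argument (via the set $A^*$) to show that the final potential $\nu_T$ satisfies $\nu_T-\nu^*\leq\tau^*$. Your final displayed chain of inequalities is exactly the bound the paper establishes. One intermediate step as written is not literally correct, however: the claim $\sum_i\tau_i\leq\tau^*\ln\frac{\nu_0-\nu^*}{\tau^*}$ fails in general --- for instance, a single iteration with $B_1=A^*$ and $\nu_0-\nu^*=\tau^*$ gives $\sum_i\tau_i=\tau^*>0=\tau^*\ln 1$. The clean fix, which you gesture at with ``the right-hand side is governed by the $1$,'' is the paper's device of viewing $\sum_i\min\bigl\{1,\tau^*/(\nu_{i-1}-\nu^*)\bigr\}(\nu_{i-1}-\nu_i)$ as a lower Darboux sum of the piecewise function $f(\nu)=\min\{1,\tau^*/(\nu-\nu^*)\}$ on $[\nu_\ell,\nu_0]$: the portion of the interval below the breakpoint $\tau^*+\nu^*$ contributes only the additive term $\tau^*+\nu^*-\nu_\ell$ (this is precisely the $\nu^*+\tau^*-\nu(A)$ slack that makes your final display close), and only the portion above the breakpoint contributes the logarithm. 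With that one step made precise, your argument coincides with the paper's proof; the closing remarks on polynomial-time density minimization are handled separately in the paper (Lemma~\ref{l:R}) and are not part of this theorem.
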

\begin{proof}
Let $\ell$ be the number of iterations. 
Let $A_0=\empt$ and for $i=1, \ldots, \ell$ let 
$A_i$ be the intermediate solution at the end of iteration $i$ 
and $B_i=A_i \sem A_{i-1}$. Let $\nu_i=\nu(A_i)$, $i=0, \ldots,\ell$. Then:
$$
\f{\tau(B_i)}{\nu_{i-1}-\nu_i} \leq \min\left\{1,\f{\tau^*}{\nu_{i-1}-\nu^*}\right\}  \ \ \ \ \ i=1, \ldots, \ell
$$
Since $\nu$ is decreasing 
$$
\sum_{i=1}^\ell \tau(B_i)  \leq \sum_{i=1}^\ell \min\left\{1,\f{\tau^*}{\nu_{i-1}-\nu^*} \right\}(\nu_{i-1}-\nu_i)
$$
This is the lower Darboux sum of the function 
$\di
f(\nu)=\left\{ 
\begin{array}{ll} 
1                                        & \mbox{ if } \nu \leq \tau^*+\nu^* \\  
\f{\tau^*}{\nu-\nu^*} \ \ & \mbox{ if } \nu > \tau^*+\nu^*      
\end{array} 
\right .
$
in the interval $[\nu_\ell,\nu_0]$ w.r.t. the partition $\nu_\ell <\nu_{\ell-1} < \cdots < \nu_0$.
We claim that $\tau^*+\nu^* \geq \nu_\ell$.
For this, note that $\f{\tau(A^*)}{\nu(A)-\nu(A \cup A^*)} \geq 1$, thus since $\nu$ is decreasing
$\tau(A^*) \geq \nu(A)-\nu(A \cup A^*) \geq \nu(A)-\nu(A^*)$.
Consequently, $\sum_{i=1}^\ell \tau(B_i)$ is bounded by
$$
\int_{\nu_\ell}^{\nu_0} f(\nu) d\nu 
= \int_{\nu_\ell}^{\tau^*+\nu^*} 1d\nu +  \int_{\tau^*+\nu^*}^{\nu_0} \f{\tau^*}{\nu-\nu^*}d\nu  
= \tau^*+\nu^*-\nu_\ell +\tau^* \ln \f{\nu_0-\nu^*}{\tau^*}  
$$
Let $A=\bigcup_{i=1}^{\ell}B_i$ be the set computed by the algorithm. 
Since $\tau$ is sub-additive
$$
\tau(A) \leq \sum_{i=1}^\ell \tau(B_i) \leq \tau^*+\nu^*-\nu(A)  +\tau^* \ln \f{\nu_0-\nu^*}{\tau^*}  
$$
Thus the approximation ratio is bounded by
$\f{\tau(A)+\nu(A)}{\opt} \leq 1+\f{\tau^*}{\opt} \ln \f{\nu_0-\nu^*}{\tau^*}$.
\qed
\end{proof}

\section{Algorithm for general thresholds (Theorem~\ref{t:EC})} \label{s:EC}

Given an instance $G=(V,E),\t,R$ of {\AEC} 
the corresponding {\GMC} instance $U,\tau,\nu$ is defined as follows.
We put at each node $u \in V$ a large set of ``assignment units'', 
and let $U$ be the union of these sets of ``assignment units''. 
Note that to every $A \subs U$ naturally corresponds the assignment $\a$ where  
$a_u$ is the number of units in $A$ put at $u$. 
It would be more convenient to define $\nu$ and $\tau$ in terms of assignments, 
by considering instead of a set $A \subs U$ the corresponding assignment $\a$.

To define $\nu$ and $\tau$, let us recall the assignments $\q$ and $\c$ from the Introduction.
We have $c_v=q_v=0$ if $v \in V \sem R$ and for $u \in R$:
\begin{itemize}
\item[$\bullet$]
$\di q_u=\min_{e =uv \in E} t^e_u$ is the minimum threshold at $u$ of an edge in $E$ incident to $u$.
\item[$\bullet$]
$\di c_u=\min_{e =uv \in E}(t^e_u+t^e_v)-q_u$, so $c_u+q_u$ is the minimum value of an edge in $E$ incident to $u$.
\end{itemize}
We let $Q=\q(V)=\q(R)$ and $C=\c(R)$.
Note that $\c(R') \leq \th \q(R')$ for any $R' \subs R$; in particular, $C \leq \th Q$. 
For an assignment $\a$ that ``augments'' $\q$ 
let $R_{\q+\a}$  denote the set of terminals covered  by $E_{\q+\a}$. 
A natural definition of the potential and the payment functions would be 
$\tau(\a)=\a(V)$ and $\nu(\a)=(\c+\q)(R \sem R_{\q+\a})$ but this will enable to prove only ratio $1+\ln (\th+1)$.
We show a better ratio by adding to the potential in advance the ``fixed'' part $Q$.
We define
$$
\tau(\a)=\a(V) \ \ \ \ \ \ \nu(\a)=Q+\c(R \sem R_{\q+\a})  
$$
It is easy to see that $\nu$ is decreasing, $\tau$ is sub-additive, and $\tau({\bf 0})=0$.

The next lemma shows that the obtained {\GMC} instance is equivalent to the original {\AEC} instance.

\begin{lemma} \label{l:fe}
If $\q+\a$ is a feasible solution for {\AEC}  then $\tau(\a)+\nu(\a)=Q+\a(V)$.
If $\a$ is a feasible solution for {\GMC} then one can construct in polynomial time a feasible solution for {\AEC}  
of value at most $\tau(\a)+\nu(\a)$. In particular, both problems have the same optimal value,
and {\GMC} has an optimal solution $\a^*$ such that $\nu(\a^*)=Q$ and thus ${\opt}=\tau(\a^*)+Q$.
\end{lemma}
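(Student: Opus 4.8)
The plan is to establish the three claims in order, starting with the easy direct calculation and then handling the two directions of the equivalence. For the first claim, suppose $\q+\a$ is feasible for {\AEC}, so $E_{\q+\a}$ covers $R$, which means $R_{\q+\a}=R$ and hence $R\sem R_{\q+\a}=\empt$. Then $\nu(\a)=Q+\c(\empt)=Q$ since $\c$ is a nonnegative assignment with $\c(\empt)=0$, and $\tau(\a)=\a(V)$ by definition, so $\tau(\a)+\nu(\a)=Q+\a(V)$ as required. This is immediate from the definitions.

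For the second claim I would start from an arbitrary feasible solution $\a$ for {\GMC} (feasibility here just means $\a$ is a nonnegative assignment; there is no hard constraint), and show how to build an {\AEC} solution of value at most $\tau(\a)+\nu(\a)=\a(V)+Q+\c(R\sem R_{\q+\a})$. The idea is to take $\q+\a$, which already covers the terminal set $R_{\q+\a}$ via $E_{\q+\a}$, and augment it to cover the remaining terminals $R'=R\sem R_{\q+\a}$ cheaply. For each $u\in R'$, let $e_u=uv$ be a minimum-value edge incident to $u$, so its value is $t^{e_u}_u+t^{e_u}_v=q_u+c_u$; raising the assignment at $u$ to $q_u+c_u$ (note $q_u$ is already contributed by $\q$, so the extra cost at $u$ is $c_u$) and at $v$ to at least $t^{e_u}_v$ activates $e_u$ and covers $u$. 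Here the subtlety is bounding the total extra cost paid on the non-terminal endpoints $v$; since $R$ can be assumed an independent set (as noted in the paper, $R\sem V$ — i.e. the part of $R$ we care about — is independent, or one reduces to the bipartite case), each such $v$ lies in $V\sem R$, and one must argue that the cost charged to these $v$'s is already accounted for within $c_u$. The cleanest way is to observe that $c_u+q_u$ is the \emph{full} edge value $t^{e_u}_u+t^{e_u}_v$, so assigning $a'_u=t^{e_u}_u$ and $a'_v=t^{e_u}_v$ (taking the max with the existing assignment) costs at most $c_u+q_u$ in total over the two endpoints, and summing over $u\in R'$ and adding to $(\q+\a)(V)=Q+\a(V)$ gives a feasible {\AEC} assignment of value at most $Q+\a(V)+\c(R')=\tau(\a)+\nu(\a)$. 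The polynomial-time claim is clear since each $e_u$ is found by a single minimization over incident edges.

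For the final ``in particular'' claim: the first two parts show that the optimal {\AEC} value is at most the optimal {\GMC} value (take an optimal {\GMC} $\a$ and apply part two, noting an optimal {\AEC} solution can be written as $\q+\a$ for suitable $\a\geq 0$ by the assignment-normalization discussion before the {\AEC} box) and at least it (take an optimal {\AEC} solution, write it as $\q+\a$, apply part one to get a {\GMC} solution of the same value $Q+\a(V)$). Hence the two optimal values coincide. Finally, given any optimal {\GMC} solution, part two produces an {\AEC} solution $\q+\a'$ of value $\leq\opt$, hence $=\opt$ and feasible; writing this as a {\GMC} solution $\a'$ via part one gives $\nu(\a')=Q$ and $\tau(\a')+\nu(\a')=\opt$, so $\a^*=\a'$ works. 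The main obstacle I anticipate is the double-counting bookkeeping on non-terminal endpoints in the second claim — making sure that ``$\max$ with the existing assignment'' never forces us to pay more than $c_u$ beyond what $\q$ and $\a$ already provide — but the independence of $R$ together with the fact that $c_u+q_u$ equals the whole edge value (not just the $u$-side threshold) makes this go through.
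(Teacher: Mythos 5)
Your proof is correct and follows essentially the same route as the paper's: the first claim by direct computation, the second by augmenting $\q+\a$ with a minimum-value edge per uncovered terminal and charging a marginal cost of $c_u$ (using $t^{e_u}_u\geq q_u$), and the coincidence of optima by combining the two directions. If anything, your bookkeeping of how the edge value $q_u+c_u$ is split between the two endpoints is slightly more explicit than the paper's, which simply assigns the whole budget $c_u$ to the terminal $u$.
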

\begin{proof}
If $\q+\a$ is a feasible {\AEC} solution then $R_{\q+\a}=R$ and thus $\nu(\a)=Q$.
Consequently, $\tau(\a)+\nu(\a)=\a(V)+Q$.

Let now $\a$ be a {\GMC} solution.
The assignment $\q+\a$ has value $Q+\a(V)$ and activates the edge set $E_{\q+\a}$ that covers $R_{\q+\a}$.
To cover $R \sem R_{\q+\a}$, pick for every $u \in R \sem R_{\q+\a}$ an edge $uv$ with $t^{uv}_u+t^{uv}_v$ minimum.
Let $\b$ be an assignment defined by $b_u=c_u$ if $u \in R \sem R_{\q+\a}$ and $b_u=0$ otherwise.
The set of picked edges can be activated by an assignment $\q+\b$ that has value $Q+\c(R \sem R_{\q+\a})$.
The assignment $\q+\a+\b$ activates both edge sets and has value 
$Q+\a(V)+\c(R \sem R_{\q+\a})=\tau(\a)+\nu(\a)$, as required.
\qed
\end{proof}

For the obtained {\GMC} instance, let us fix an optimal solution $\a^*$ as in Lemma~\ref{l:fe}, 
so $\nu^*=Q$ and $\opt=\tau^*+Q$. Denote $\nu_0=\nu({\bf 0})=Q+\c(R)$, and note that $\c(R) \leq \th Q$.
To apply Theorem~\ref{t:ga} we need several bounds given in the next lemma.

\begin{lemma} \label{l:bound}
$\di \f{\opt}{\tau^*} \geq 1+\f{1}{\th}$, \ 
$\di \f{\nu_0}{\tau^*}\leq (\th+1)\left(\f{\opt}{\tau^*}-1\right)$, and \ 
$\di \f{\nu_0-\nu^*}{\tau^*} \leq \De+1$.
\end{lemma}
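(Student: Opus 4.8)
All three inequalities are consequences of two facts already recorded above: $\opt=\tau^*+Q$ (equivalently $\nu^*=Q$, from Lemma~\ref{l:fe}), together with $\opt\le(\c+\q)(R)=C+Q$ and $C\le\th Q$, where $C=\c(R)$ and $Q=\q(R)$. The first two bounds are then immediate algebra; the third is a purely combinatorial statement relating the optimal assignment to the graph $G$, and is where the work lies.

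For the first bound, combining $\opt=\tau^*+Q$ with $\opt\le C+Q\le(\th+1)Q$ gives $\tau^*\le\th Q$, hence $Q/\tau^*\ge 1/\th$ and $\di\f{\opt}{\tau^*}=1+\f{Q}{\tau^*}\ge 1+\f{1}{\th}$. For the second bound, $\nu_0=Q+C\le(\th+1)Q$, so, since $\opt/\tau^*-1=Q/\tau^*$, we get $\di\f{\nu_0}{\tau^*}\le\f{(\th+1)Q}{\tau^*}=(\th+1)\left(\f{\opt}{\tau^*}-1\right)$.

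For the third bound we must show $\nu_0-\nu^*=\c(R)\le(\De+1)\tau^*$ (if $E_\q$ already covers some terminals, $\nu_0-\nu^*$ is only the $\c$-cost of the remaining ones, which is no larger, so it suffices to bound this). I would fix the optimal {\AEC} solution $\q+\a^*$, of value $\opt$ with $\tau^*=\a^*(V)$, and a minimal $R$-cover $J\subs E_{\q+\a^*}$. A minimal cover is a vertex-disjoint union of stars, and no edge of $J$ joins two non-terminals (deleting such an edge keeps a cover), so every edge of $J$ has a terminal endnode. For a terminal $u$ covered by an edge $e=uv\in J$, the definition of $c_u$ gives $c_u\le(t^e_u+t^e_v)-q_u=(t^e_u-q_u)+t^e_v\le a^*_u+(q_v+a^*_v)$, since $\q+\a^*$ activates $e$. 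Summing over $u\in R$ (fixing one covering edge per terminal), the terms $t^e_u-q_u\le a^*_u$ contribute at most $\a^*(R)$, while each ``other-endnode'' term $q_v+a^*_v$ is counted with multiplicity at most the number of terminals assigned to $v$, hence at most the number of terminal neighbours of $v$, i.e. at most $\De$. When $R$ is an independent set every such $v$ is a non-terminal, so $q_v=0$ and one obtains directly $\c(R)\le\a^*(R)+\De\,\a^*(V\sem R)\le\De\tau^*$; in general the star structure of $J$ is used so that each terminal--terminal edge is charged by only its (at most two) endterminals and the fixed part $\q$ is absorbed, at the cost of one extra unit of multiplicity, yielding $\c(R)\le(\De+1)\tau^*$.

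The routine parts are the first two bounds and the one-line charging for independent $R$. The main obstacle is the non-bipartite case of the third bound: when $J$ contains edges with both endnodes in $R$, the naive charging picks up the fixed thresholds $q_v$ at terminals, which are not controlled by $\tau^*=\a^*(V)$. Handling this cleanly requires exploiting that $J$ decomposes into stars, that the terminals one must account for are exactly those not covered by $E_\q$, and that within a star each edge is charged only by its endterminals, so that the total multiplicity with which any $a^*_v$ is used stays at most $\De+1$.
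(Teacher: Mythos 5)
Your proofs of the first two bounds are correct and coincide with the paper's: both follow from $\opt=\tau^*+Q$ together with $\opt\le (\c+\q)(R)\le(\th+1)Q$. For the third bound you also use the paper's device --- decompose an inclusion-minimal cover $J\subs E_{\q+\a^*}$ into node-disjoint stars with terminal leaves and charge each $c_u$ to the $\a^*$-values on its covering edge --- and your independent-set case is complete; it even recovers the sharper bound $\c(R)\le\De\,\tau^*$ that the paper records separately right after the lemma.

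The gap is exactly where you locate it: edges of $J$ with both ends in $R$. There your charging gives $c_u\le a^*_u+a^*_v+q_v$ with $v\in R$, and the leftover $q_v$ belongs to the fixed assignment $\q$, not to $\a^*$; it is not bounded by any multiple of $\tau^*$ (one can have $\tau^*$ arbitrarily small while $Q$ is large), so it cannot be ``absorbed at the cost of one extra unit of multiplicity'' on the $a^*$-side --- extra multiplicity only buys you more copies of $\a^*$-values, never a $q$-value. A real argument for terminal--terminal edges has to use something else, e.g.\ that the terminals still to be accounted for lie in $R\sem R_\q$, so for their covering edge $e=uv$ at least one of $t^e_u>q_u$, $t^e_v>q_v$ holds and some $a^*$-coordinate is forced up; your sketch does not supply this. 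For what it is worth, the paper's own proof is equally terse at precisely this step (it asserts $\a^*(S)\ge\max_{u\in R\cap S}c_u$ ``by the definition of $\c$'', which likewise silently drops the $q$-term of a terminal root of the star), so you have correctly identified the one delicate point of the lemma; but flagging it and asserting it can be handled is not the same as proving it.
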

\begin{proof}
Note that 
$$
\tau^*+Q = {\opt} \leq \nu_0 \leq (\th+1) Q \ .
$$ 
In particular, $Q \geq \tau^*/\th$, and this implies the first bound of the lemma
$$
\f{\opt}{\tau^*}=1+\f{Q}{\tau^*} \geq 1+\f{1}{\th} \ .
$$
The second bound of the lemma holds since $\nu_0 \leq (\th+1)Q =(\th+1)(\opt-\tau^*)$.

The last bound of the lemma is equivalent to the bound $\c(R) \leq \tau^*(\De+1)$.
Let $J$ be an inclusion minimal edge cover of $R$ activated by $\q+\a^*$. 
Then $J$ is a collection ${\cal S}$ of node disjoint rooted stars with leaves in $R$.
Let $S \in {\cal S}$. By the definition of $\c$, $\di \a^*(S) \geq \max_{u \in R \cap S} c_u$,
thus  $\c(R \cap S) \leq |R \cap S| \a^*(S) \leq (\De+1)\a^*(S)$. Consequently,
$\di \c(R)=\sum_{S \in {\cal S}}\c(R \cap S) \leq (\De+1) \sum_{S \in {\cal S}}\a^*(S) \leq (\De+1) \a^*(V)$.
\qed
\end{proof}

We will show later that the {\sc Greedy Algorithm} can be implemented in polynomial time;
now we focus on showing that it achieves the approximation ratios stated in Theorem~\ref{t:EC}.
Substituting  Lemma~\ref{l:bound} second bound in Theorem~\ref{t:ga} second bound and denoting 
$x=\f{\opt}{\tau^*}-1$, we get that $x \geq 1/\th$ and that the ratio is bounded by 
$$
1+\f{\tau^*}{\opt} \cdot \ln \left(1+\f{\nu_0}{\tau^*}-\f{\opt}{\tau^*} \right) \leq 1+\f{\ln (\th x)}{x+1} =1+f(x)
$$
Consequently, the the ratio is bounded by $1+\max\{f(x): x \geq 1/\th\}$.
We now derive a formula for the maximum.
We have $\di \lim_{x \rightarrow \infty}f(x)=0$ (this can be shown using L'Hospital's Rule),
and $f(1/\th)=0$. Also:
$$
f'(x)=-\f{1}{{(x+1)}^2} \ln (\th x)+\f{1}{x+1} \f{1}{x}
$$
Hence $f'(x)=0$ if and only if $\f{1}{x+1}\ln (\th x)=\f{1}{x}$, namely, $x+1=x\ln (\th x)$.
For the analysis, we substitute $x \gets 1/x$, and get the equation $1+x=\ln (\th/x)$, where $0<x \leq \th$.
Since the function $x+1$ is strictly increasing and the function 
$\ln (\th/x)$ is strictly decreasing, this equation has at most one root; we claim that this root exists and
is in the interval $(0,\th]$.
To see this consider the function $h(x)=x+1-\ln (\th/x)$,
and note that $h$ is continuous and that $h(\th)=\th+1>0$ while $h(\eps)=\eps+1-\ln (\th/\eps)<0$
for $\eps>0$ small enough.

From this we get that the ratio is bounded by $1+\om(\th)$, where $\om(\th)$ 
is the root of the equation $x+1=\ln(\th/x)$.

Substituting  Lemma~\ref{l:bound} third bound in Theorem~\ref{t:ga} first bound
and observing that $\tau^* \leq {\opt}$ we get that the ratio is bounded by $1+\ln(\De+1)$.
In the case when $R$ is an independent set in $G$, it is easy to see that Lemma~\ref{l:bound} third bound improves to 
$\di \f{\nu_0-\nu^*}{\tau^*} \leq \De$, and we get ratio $1+\ln \De$ in this case.

\medskip

Finally, we show that the {\sc Greedy Algorithm} algorithm can be implemented in polynomial time.
As was mentioned in Section~\ref{s:GMC} before Theorem~\ref{t:ga}, 
we just need to perform in polynomial time the following two operations for any assignment $\a$:  
to check the condition $\nu(\a) > \nu^*$, and to find an augmenting assignment $\b$ of minimum density.

It is is easy to see that assignments $\q$ and $\c$ can be computed in polynomial time, 
and thus the potential $\nu(\a)=Q+\c(R \sem R_{\q+\a})$ can be computed in polynomial time, for any $\a$.
Let $\a^*$ be an optimal solution as in Lemma~\ref{l:fe}, and denote $\tau^*=\tau(\a^*)$ and $\nu^*=\nu(\a^*)=Q$.
Then the condition $\nu(A) > \nu ^*$ is equivalent to $\nu(\a) \geq Q$
and thus can be checked in polynomial time.

Now we show how to find an augmenting assignment $\b$ of minimum density.
Note that the density of an assignment $\b$ w.r.t. $\a$ is
$$
\f{\tau(\b)}{\nu(\a)-\nu(\a+\b)}=\f{\b(V)}{\c(R\sem R_{\q+\a})-\c(R \sem R_{\q+\a+\b})}=
\f{\b(V)}{\c(R_{\q+\a+\b} \sem R_{\q+\a})} \ .
$$

\begin{lemma} \label{l:R}
There exists a polynomial time algorithm that given an instance of 
{\AEC} and an assignment $\a$ finds an assignment $\b$ of minimum density. 
\end{lemma}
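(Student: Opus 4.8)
The plan is to show that some minimum-density augmenting assignment is supported on a single \emph{star}, and then to enumerate all relevant stars in polynomial time.

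\textbf{Reduction to a single star.} Write $R'=R\sem R_{\q+\a}$; only terminals of $R'$ contribute to the density denominator $\c(R_{\q+\a+\b}\sem R_{\q+\a})$. Given an arbitrary augmenting assignment $\b$, put $T=R_{\q+\a+\b}\cap R'$ and take an inclusion-minimal $J\subs E_{\q+\a+\b}$ covering $T$; a standard argument shows each connected component of $J$ is a star and all its leaves lie in $T\subs R'$. Let $\b'$ be the least assignment activating $J$, i.e. $b'_v=\max\{(t^e_v-(\q+\a)_v)^+:e\in J,\ e\ni v\}$. Since $\b'\le\b$ coordinate-wise, monotonicity gives $R_{\q+\a+\b'}\subs R_{\q+\a+\b}$, while $J\subs E_{\q+\a+\b'}$ gives $T\subs R_{\q+\a+\b'}\cap R'$; hence $R_{\q+\a+\b'}\cap R'=T$, and since $\b'(V)\le\b(V)$ the density of $\b'$ is at most that of $\b$ --- crucially, shrinking $\b$ cannot create new coverage. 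Dropping from $J$ any star $S_i$ with $\c(V(S_i)\cap R')=0$ only decreases $\b'(V)$ and leaves $\c(T)$ unchanged, so we may assume every star of $J$ carries positive $\c$-value. As the stars are node-disjoint, $\b'(V)=\sum_i\b'(V(S_i))$ and $\c(T)=\sum_i\c(V(S_i)\cap R')$, so by the mediant inequality some single star of $J$, with $\b'$ restricted to its vertices, has density at most that of $\b'$, hence at most that of $\b$. Thus the minimum density over all augmenting assignments is attained by a single star.

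\textbf{Enumerating stars.} A candidate star is described by a center $z\in V$ and a level $\lambda=b_z$; relative to $(z,\lambda)$ an edge $e=zu\in E$ is \emph{available} if $t^e_z\le(\q+\a)_z+\lambda$, and for each neighbour $u$ we keep its cheapest available edge, of cost $\gamma_u=(t^e_u-(\q+\a)_u)^+$. Choosing a set $T$ of $R'$-neighbours to serve yields the assignment $b_z=\lambda$, $b_u=\gamma_u$ $(u\in T)$, which covers $T$ together with $z$ when $z\in R'$ and $T\neq\empt$ (and possibly more, which only helps); its density is at most $\f{\lambda+\sum_{u\in T}\gamma_u}{[z\in R']c_z+\sum_{u\in T}c_u}$. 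It suffices to take $\lambda$ from the polynomial-size set $\{(t^e_z-(\q+\a)_z)^+:e\in E,\ e\ni z\}$ and $T$ among the $R'$-neighbours, since, by the reduction, the target star has all leaves in $R'$ and may be cheapened to set $b_z$ equal to the largest threshold of an edge it actually uses; moreover, if that star's natural center were a terminal forced to serve a non-terminal neighbour, re-rooting at the latter makes it fit this template. For fixed $(z,\lambda)$ the ratio-minimization over $T$ is standard --- an optimal $T$ is a prefix of the $R'$-neighbours sorted by increasing $\gamma_u/c_u$ --- so we evaluate the $O(|R'|)$ prefixes, discarding those with zero denominator. Taking the best over the $O(|V|\cdot|E|)$ pairs $(z,\lambda)$ produces, in polynomial time, an augmenting assignment whose density equals the optimal one; if $R'=\empt$ there is no augmenting assignment of finite density, and this is detected.

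\textbf{Main obstacle.} The delicate step is the reduction, and within it the point that replacing $\b$ by the minimal activator of a minimal cover $J$ does not enlarge the set of newly covered terminals --- which works precisely because $\b'\le\b$ and the activating functions are monotone, so no value is lost. One must also respect the two-sided activation condition when passing to a single component, which dictates the bookkeeping of the center's level against the leaves it can reach, together with the re-rooting remark above. After that, the polynomial bound on the number of relevant levels, the accounting of $c_z$, and the prefix structure of the leaf-selection subproblem are routine.
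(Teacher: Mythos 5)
Your proposal is correct and follows essentially the same route as the paper's proof: reduce to a single proper star via an inclusion-minimal cover of the newly covered terminals, node-disjoint star decomposition, and an averaging (mediant) argument, then enumerate the $O(|V|\cdot|E|)$ center/level pairs and solve the leaf-selection ratio minimization by the sorted-by-$\gamma_u/c_u$ structure (the paper phrases this as a greedy that adds leaves while the ratio decreases, which is equivalent to your prefix enumeration). Your treatment is, if anything, slightly more careful than the paper's on the degenerate cases (zero-denominator components, a terminal center contributing $c_z$ to the denominator, and re-rooting single-edge stars).
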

\begin{proof}
A {\bf star} is a rooted tree $S=(V_S,E_S)$ with at least one edge such that only its root $s$ may have degree $\geq 2$. 
We say that a star $S$ is a {\bf proper star} if all the leaves of $S$ are terminals.
We denote the terminals in $S$ by $R_S=R \cap V_S$.

Since $\q,\a$ are given assignments, we may simplify the notation by assuming that 
$R \gets R \sem R_{\q+\a}$ is our set of terminals, and that 
$\a \gets \q+\a$ is our given assignment.
Then the density of $\b$ is just $\f{\b(V)}{\c(R_{\a+\b})}$.
Let $\b^*$ be an assignment of minimum density, and let $J^* \subs E_{\a+\b^*}$ be an inclusion minimal $R_{\a+\b^*}$-cover.
Then $J^*$ decomposes into a collection ${\cal S}$ of node disjoint proper stars that collectively cover $R_{\a+\b^*}$. 
For $S \in {\cal S}$ let $\b^S$ be the optimal assignment such that $\a+\b^S$ activates $S$.
Since the stars in ${\cal S}$ are node disjoint
$$
\sum_{S \in {\cal S}} \b^S(V) \leq \b^*(V) \ \ \mbox{ and } \ \ \sum_{S \in {\cal S}}\c(R_S)=\c(R_{\a+\b^*}) \ .
$$
By an averaging argument, $\frac{\b^S(V)}{\c(R_S)} \leq \f{\b^*(V)}{\c(R_{\a+\b^*})}$ holds
for some $S \in {\cal S}$, and since $\b^*$ is a minimum density assignment, so is $\b^S$,
and $\frac{\b^S(V)}{\c(R_S)}=\f{\b^*(V)}{\c(R_{\a+\b^*})}$ holds.
Consequently, it is sufficient to show how to find in polynomial time 
an assignment $\b$ such that $\a+\b$ activates a proper star $S$ and $\frac{\b(V)}{\c(R_S)}$ is minimal.

We may assume that we know the root $v$ and the value $w=b_v$ of an optimal density pair $S,\b$;
there are at most $|V||E|$ choices and we can try all and return the best outcome.
Let $R_w=\{u \in R: \mbox{there is a } uv\mbox{-edge } e \mbox{ with } t^e_v \leq a_v+w\}$. 
For $u \in R_w$ let $b_u$ be the minimal non-negative number for which there is a $uv$-edge $e$ with 
$a_v+w \geq t^e_v$ and $a_u+b_u \geq t^e_u$. 
Then our problem is equivalent to finding $R_S \subs R_w$ with $\si(R_S)=\f{w+\b(R_S)}{\c(R_S)}$ minimum.
This problem can be solved in polynomial time, by starting with $R_S=\empt$ and while there is $u \in R_w \sem R_S$
with $\si(R_S+u)<\si(R_S)$, adding $u \in R_w \sem R_S$ to $R_S$ with $b_u/c_u$ minimum. 
\qed
\end{proof}

The proof of Theorem~\ref{t:EC} is complete. 

\section{Locally uniform thresholds (Theorem~\ref{t:lu})} \label{s:lu}

Here we consider the {\sc Bipartite} {\AEC} problem with locally uniform thresholds.
This means that each non-terminal $v \in V \sem R$ has weight $w_v$ 
and all edges incident to $v$ have the same threshold $t^v$; in the $\th$-bounded version $w_v \leq \th t^v$.
We consider a natural greedy algorithm that repeatedly picks a star $S$ that minimizes the average price paid for 
each terminal (the quotient of the optimal activation value of $S$ over $|R_S|$), and then removes $R_S$.
Each time we choose a star $S$ we distribute its activation value uniformly among its terminals, 
paying in the computed solution the average price for each terminal of $S$.

We now apply a standard ``set-cover'' analysis, cf. \cite{young}. 
In some optimal solution fix an inclusion maximal star $S^*$ with center $v$ and terminals 
$R_{S^*}$ covered by the algorithm in the order 
$r_k, r_{k-1}, . . . ,r_1$, where $r_k$ is covered first and $r_1$ last; we bound the algorithm payment for covering $R_{S^*}$.
Note that $1 \leq k \leq \De$.
Denote $w=w_v$ and let $t$ be the threshold of the terminals in $S^*$.
Let $S^*_i$ be the substar of $S^*$ with leaves $r_i, \ldots,r_1$.
At the start of the iteration in which the algorithm covers $r_i$, the terminals of $S^*_i$ are uncovered. 
Thus the algorithm pays for covering $r_i$ at most the average price paid by $S^*_i$, namely $(w+it)/i =w/i+t$. 
Over all iterations, the algorithm pays for covering $R_{S^*}$ at most $w H_k+kt$, while the optimum pays $w+kt$.
Thus the quotient between them is bounded by
$$
\f{w H_k +kt}{w+kt} =\f{w/t H_k +k}{w/t+k} \leq \f{\th H_k+k}{\th+k}=1+\f{\th(H_k-1)}{\th+k} \leq 
1+\max_{1 \leq k \leq \De} \f{H_k-1}{1+k/\th} \ .
$$
Since any optimal solution decomposes into node disjoint stars,
the last term bounds the approximation ratio, concluding the proof of Theorem~\ref{t:lu}.
We make some observations about this bound. Let $g(k)=\f{\th(H_k-1)}{\th+k}$. We have
$$
g(k+1)-g(k)=\f{\th}{(k+\th)(k+1+\th)}\left(2-H_k+\f{\th-1}{k+1} \right)
$$
Thus $g(k+1) \geq g(k)$ if and only if $2-H_k+\f{\th-1}{k+1} >0$. 
Hence if $k_\th$ is the smallest integer such that $H_k \geq 2+\f{\th-1}{k+1}$ then 
$\di \max_{1 \leq k \leq \De} g(k)=g(\min\{k_\th,\De\})$.
We do not have a more convenient formula of $\max_{k \geq 1} g(k)$ for arbitrary $\th$, 
but we can bound it using the inequality $H_k-1 \leq \ln k$. Then we have:
$$
\max_{1 \leq k \leq \De} \f{\th(H_k-1)}{k+\th} \leq \max_{x \geq 1} \f{\th\ln x}{x+\th}=\max_{x \geq 1} f(x) \ .
$$
Using fundamental calculus one can see that the maximum is attained when $x+\th=x \ln x$, 
and  substituting this in $f(x)$ we get
that $\max_{x \geq 1} f(x)=\f{\th}{\al}$ where $\al$ is the solution to the equation  $x+\th=x \ln x$. 
We have  $\al=\f{\th}{W(\th/e)}=\f{\th}{\om(\th)}$.
Thus the ratio is bounded by $1+\om(\th)$, a bound that we got before in Theorem~\ref{t:EC}.

If $\th=1$ then $k_\th=4$, since $H_3=11/6<2$ and $H_4=25/12>2$.
We have $g(4)=\f{13}{60}$, so for $\th=1$ we get ratio $1+g(4)=\f{73}{60}$. 
The ratio $\f{73}{60}$ is tight for unit thresholds, as shows the example in Fig.~\ref{f:tight}. 
The instance has $48$ terminals (in black), and two sets of covering nodes:
the upper $12$ nodes that form an optimal cover, and the bottom $13$ nodes. 
The bottom nodes have $3$ nodes of degree $4$, $4$ of degree $3$, and $6$ of degree $2$. 
The algorithm may start taking all bottom nodes, and only then add the upper ones, thus creating a solution of value $73$, 
instead of the optimum $60$.

\begin{figure} \centering
\includegraphics{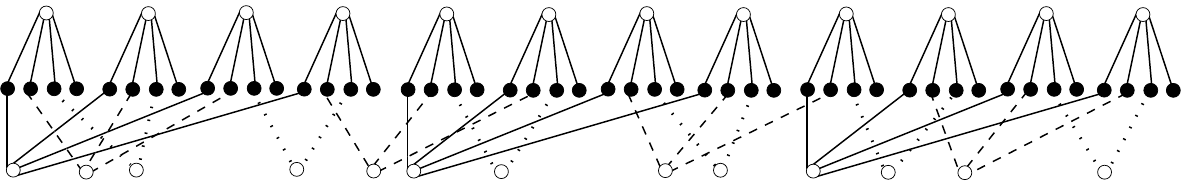}
\caption{Tight example of ratio $\f{73}{60}$ for unit thresholds.}
\label{f:tight}
\end{figure}

\section{Unit thresholds (Theorem~\ref{t:unit})} \label{s:unit}

Here we consider the case of unit thresholds when $t_u^e=t_v^{e}=1$ for every $uv$-edge $e$.
By a reduction from \cite{CKN}, we may assume that the instance is bipartite. 
Specifically, for any optimal assignment $\a$ we have $a_u=1$ for all $u \in R$, 
hence we can consider the residual instance obtained by removing the terminals covered by edges with both ends in $R$;
in the new obtained instance $R$ is an independent set, and recall that we may assume that $V \sem R$ is an independent set.

One can observe that in the obtained bipartite instance, 
$\a$ is an optimal solution if and only if $a_v \in \{0,1\}$ for all $v \in V$, $a_v=1$ for all $v \in R$,
and the set $C=\{v \in V \sem R:a_v=1\}$ covers $R$, meaning that $R$ is the set of neighbors of $C$.
Namely, our problem is equivalent to $\min\{|C|+|R|: C \subs V \sem R, C \mbox{ covers } R\}$.  
On the other hand the problem $\min\{|C|:C \subs V \sem R, C \mbox{ covers } R\}$ is essentially the (unweighted) 
{\SC} problem, and $C$ is a feasible solution to this {\SC} instance if and only if $C \cup R$ is the characteristic set 
of a feasible assignment for the {\AEC} instance. Note that both problems are equivalent w.r.t. their optimal solutions but
may differ w.r.t. approximation ratios, since if $C^*$ is an optimal solution to the {\SC} instance
then $\f{|C|+|R|}{|C^*|+|R|}$ may be much smaller than $\f{|C|}{|C^*|}$.

Recall that  a standard greedy algorithm for {\SC} repeatedly picks the center of a largest star and removes the star from the graph.  
This algorithm has ratio $H_k$ for $k$-{\SC}, where $k=\De$ is the maximum degree of a non-terminal (the maximum size of a set).
However, the same algorithm achieves a much smaller ratio $\f{73}{60}$ for {\AEC} with unit thresholds; 
the ratio $\f{73}{60}$ was established in \cite{CKN}, and it also follows from the case $\th=1$ in Theorem~\ref{t:lu}. 
In what follows we denote by $\al_k$ the best known ratio for $k$-{\SC}.
We have $\al_1=\al_2=1$ ($k=2$ is the {\sc Edge-Cover} problem) and $\al_3=4/3$ \cite{DF}.
The current best ratios for $k \geq 4$ are due to \cite{FY} (see also \cite{L,ACK}).
We summarize the current values of $\al_k$ for $k \leq 7$ in the following table.

\vspace*{-0.3cm}
\begin{table} [htbp] 
\begin{center}
\begin{tabular}{|c|c|c|c|c|c|c|c|c|c|c|}
\hline
$\al_1$        & $\al_2$        & $\al_3$              & $\al_4$                  & $\al_5$                   & $\al_6$                 & $\al_7$                
\\ \hline
 \ \ $1$ \ \   & \ \ $1$ \ \    & \ \ $\f{4}{3}$ \ \ & \ \ $\f{73}{48}$ \ \ & \ \ $\f{26}{15}$ \ \ & \ \ $\f{28}{15}$ \ \ & \ \ $\f{212}{105}$ \ \ 
\\ \hline
\end{tabular}
\end{center}
\caption{Current values of $\al_k$ for $k \leq 7$.}
\label{tbl:al}        \vspace*{-0.6cm}
\label{table:al}
\end{table}

We now show how these ratios for $k$-{\SC} can be used to approximate the {\AEC} problem with unit costs.
We start by describing a simple algorithm with ratio $1\f{67}{360}<\f{73}{60}$, that uses only the $k=2$ case.

\medskip
\begin{algorithm}[H] \caption{ratio $1\f{67}{360}$}
$A \leftarrow \emptyset$ \\
 \While{there exists a star with at least $3$ terminals}{
 add to $A$ and remove from $G$ the node-set of a maximum size star}
Add to $A$ an optimal solution of the residual instance
\end{algorithm} \medskip

We claim that the above algorithm achieves approximation ratio $1\f{67}{360}$ for {\AEC}
(a similar analysis implies ratio $H_k-\f{1}{6}$ for {\SC}).
In some optimal solution fix a star $S^*$ with terminals covered in the order $r_k, r_{k-1}, . . . ,r_1$,
where $r_k$ is covered first and $r_1$ last; we bound the algorithm payment to cover these terminals.
Let $S^*_i$ be the substar of $S^*$ with leaves $r_i, \ldots,r_1$.
At the start of the iteration when $r_i$ is covered, the terminals of $S^*_i$ are uncovered. 
Thus the algorithm pays for covering $r_i$ at most the density of $S^*_i$, namely, $(i+1)i=1+1/i$.
Over all iterations, the algorithm pays for covering $R_S$ at most $k+H_k$, while the optimum pays $k+1$.
If $k=1$ then the algorithm pays at most the amount of the optimum.
We claim that if $k \geq 2$ then in fact the payment is at most $k+H_k-1/6$.
If $k=2$ then the payment is at most $3 < 3+H_3-1/6$ (we pay $3$ if the star ``survives'' all the iterations before the last).
For $k \geq 3$, the pay for the last $3$ terminals is either:
$4/3$ for each of for $r_3,r_2$ and $2$ for $r_1$ (a total of $14/3$), or 
$4/3$ for $r_3$ and $3$ for $r_2,r_1$ (a total of $13/3$).
The maximum is $14/3=3+H_3-1/6$.  
Consequently, the ratio is bounded by 
$$
\max_{k \geq 2}\f{k+H_k-1/6}{k+1}=1+\max_{k \geq 2}\f{H_k-7/6}{k+1}=1+\max_{k \geq 2} g(k)
$$
By fundamental computations we have $g(k+1)-g(k) = \f{13/6-H_k}{(k+1)(k+2)}$.
Thus $g(k)$ is increasing iff $H_k <\f{13}{6}$. 
Since $H_4=\f{23}{12}<\f{13}{6}$ and $H_5=\f{137}{60}>\f{13}{6}$, we get that $\max_{k \geq 2} g(k)=g(5)=\f{67}{360}$, 
so we have ratio $1\f{67}{360}$. \\

We now show ratio $\f{8}{7}<\f{1555}{1347}<\f{7}{6}$. 
As in the greedy algorithm for {\SC}, we repeatedly remove an inclusion maximal set of disjoint stars with maximum number of leaves and pick the set of roots of these stars.
The difference is that each time stars with more than $k$ leaves are exhausted, we compute an $\al_k$-approximate solution $A_k$ for the remaining $k$-{\SC} instance;
we let $A_0=\empt$. 
This gives many {\SC} solutions, each is a union of the centers of stars picked and $A_k$; we choose the smallest one,
and together with $R$ this gives a feasible {\AEC} solution.
Formally, the algorithm is:

\medskip
\begin{algorithm}[H] \caption{ratio $\rho=\f{1555}{1347} < 1.1545$}
\For {$k\leftarrow \De$ downto $0$}
{
remove from $G$ a maximal collection of node disjoint $(k+1)$-stars \newline
\hspace*{-0.3cm} let $C_{k+1}$ be the set of the roots of the stars removed so far  \\ 
compute an $\al_k$-approximate $k$-{\SC} solution $A_k$ in $G$
} 
Return the smallest set $C_{k+1} \cup A_k$, $k \in \{\De,\ldots,0\}$
\end{algorithm} \medskip

Since we claim ratio $\f{1555}{1347} < \f{8}{7}$, at iterations when $k \geq 7$ step~3 can be skipped,
since then we can apply a standard ``local ratio'' analysis \cite{BY}.
Indeed, when a star with $k \geq 7$ terminals is removed, the partial solution value increases by $k+1$
while the optimum decreases by at least $k$. Hence for $k \geq 7$ it is a $\f{k+1}{k} \leq \f{8}{7}$ local ratio step.
Consequently, we may assume that $\De \leq 6$, provided that we do not claim ratio better than $8/7$.

Let $r=|R|$. Let $\tau$ be the optimal value to the initial {\SC} instance.
At iteration $k$ the algorithm computes a solution of value at most $\al_k \tau+r+|C_{k+1}|$.
Thus we get ratio $\rho$ if $\rho(r+\tau) \geq \al_k \tau+r+|C_{k+1}|$ holds for some $k \leq 6$. 
Otherwise,
\begin{eqnarray*}
\rho(r+\tau)   &< & \al_6\tau+r \\
\rho(r+\tau)   &< & \al_5\tau+r+|C_6| \\
\rho(r+\tau)   &< & \al_4\tau+r+|C_5| \\
\rho(r+\tau)   &< & \al_3\tau+r+|C_4| \\
\rho(r+\tau)   &< & \al_2\tau+r+|C_3| \\
\rho(r+\tau)   &< & \al_1\tau+r+|C_2| \\
\rho(r+\tau)   &< & \hspace{1.0cm} r+|C_1| 
\end{eqnarray*}
Denote $\si=\al_1+\cdots +\al_5=\f{1581}{240}$. 
Note that $|C_1|+ \cdots + |C_6| = r$, since in this sum the number of stars with $k$ leaves is summed exactly $k$ times, $k=1, \ldots,6$. 
The first inequality, and the inequality obtained as the sum of the other six inequalities gives the following two inequalities:
\begin{eqnarray*}
\rho(r+\tau)    & < & \al_6 \tau+r \\
6 \rho(r+\tau) & < & \si \tau+7r 
\end{eqnarray*}
Dividing both inequalities by $\tau$ and denoting $x=r/\tau$ gives:
\begin{eqnarray*}
\rho(x+1)    & < & \al_6+x \\
6 \rho(x+1) & < & \si +7x 
\end{eqnarray*}
Since $\rho>1$ and $7>6\rho$ this is equivalent to:
$$
\f{6 \rho-\si}{7-6\rho} <x < \f{\al_6-\rho}{\rho-1}
$$
We obtain a contradiction if $\rho$ is the solution of the equation 
$\f{6 \rho-\si}{7-6\rho}=\f{\al_6-\rho}{\rho-1}$, namely
$$
\rho=\f{7\al_6-\si}{6\al_6-\si+1}=1+\f{\al_6-1}{6\al_6-\si+1}=1+\f{208}{1347}=\f{1555}{1347} \ .
$$

This concludes the proof of Theorem~\ref{t:unit}.


\end{document}